\newtheorem{Thm}{Theorem}
\newtheorem{Lem}[Thm]{Lemma}
\newtheorem{Cor}[Thm]{Corollary}
\theoremstyle{definition}
\newtheorem{Rem}{Remark}
\DeclareMathOperator{\lmod}{mod}
\begin{document}

\title{On the distinctness of binary sequences derived from $2$-adic expansion of m-sequences over finite prime fields}
\author{Yupeng Jiang and  Dongdai Lin\\\\ \quad State Key Laboratory Of Information Security,\\
Institute of Information Engineering,\\
Chinese Academy of Sciences, Beijing 100093, P.R. China\\ E-mail: \{jiangyupeng,ddlin\}@iie.ac.cn}
\date{}
\maketitle

\begin{abstract}
Let $p$ be an odd prime with $2$-adic expansion $\sum_{i=0}^kp_i\cdot2^i$. For a sequence
$\underline{a}=(a(t))_{t\ge 0}$ over $\mathbb{F}_{p}$, each $a(t)$ belongs to $\{0,1,\ldots, p-1\}$ and
has a unique $2$-adic expansion $$a(t)=a_0(t)+a_1(t)\cdot 2+\cdots+a_{k}(t)\cdot2^k,$$ with $a_i(t)\in\{0, 1\}$.
Let $\underline{a_i}$ denote the binary sequence $(a_i(t))_{t\ge 0}$ for $0\le i\le k$. Assume $i_0$ is the
smallest index $i$ such that $p_{i}=0$ and $\underline{a}$ and $\underline{b}$ are two different m-sequences generated by
a same primitive characteristic polynomial over
$\mathbb{F}_p$. We prove that for $i\neq i_0$ and $0\le i\le k$, $\underline{a_i}=\underline{b_i}$ if and only if
$\underline{a}=\underline{b}$, and for $i=i_0$, $\underline{a_{i_0}}=\underline{b_{i_0}}$ if and only if
$\underline{a}=\underline{b}$ or $\underline{a}=-\underline{b}$. Then the period of $\underline{a_i}$ is equal
to the period of $\underline{a}$ if $i\ne i_0$ and half of the period of $\underline{a}$ if $i=i_0$. We also
discuss a possible application of the binary sequences $\underline{a_i}$.
\end{abstract}

\textbf{Keywords:} \quad 2-adic expansion, m-sequence, period, ZUC, Mersenne prime, fast implementation

\textbf{Mathematics Subject Classification}\quad 11B50, 94A55, 94A60

\section{$2$-adic expansion of sequences}

Let $p$ be an odd prime and $\mathbb{F}_{p}$ be the finite field of order $p$. We
identify this field with the set $\{0,1, \ldots, p-1\}$. Assume $p$ has the unique $2$-adic expansion
$$p=p_0+p_1\cdot 2+\cdots+p_k\cdot 2^k$$
with $p_i\in \{0,1\}$ and $p_0=p_k=1$. Each element in the field $\mathbb{F}_{p}$ also has a unique $2$-adic
expansion. For a sequence $\underline{a}=(a(t))_{t\ge 0}$ over $\mathbb{F}_{p}$, we have $0\le a(t)\le p-1$, and then
$$a(t)=a_{0}(t)+a_1(t)\cdot 2+\cdots+a_{k}(t)\cdot 2^k$$ with $a_{i}(t)\in \{0,1\}$. The binary sequence
$\underline{a_i}=(a_i(t))_{t\ge 0}$ is called the $i$th level sequence of $\underline{a}$, and
$$\underline{a}=\underline{a_0}+\underline{a_1}\cdot 2+\cdots+\underline{a_k}\cdot 2^k$$
is called the $2$-adic expansion of the sequence $\underline{a}$.

Let $\underline{a}$ and $\underline{b}$ be two difference m-sequences generated by a same primitive polynomial
over $\mathbb{F}_{p}$. For more details about linear recurring sequences, see \cite{LiNi}.
It is natural to ask whether or not we have $\underline{a_i}=\underline{b_i}$ for $i$ satisfying $0\le i \le k$.
In \cite{ZhuQ2008}, Zhu and Qi proved that $\underline{a_0}=\underline{b_0}$
if and only if $\underline{a}=\underline{b}$. If $p$ is not a Mersenne prime, then there is an $i$ such that $p_i=0$.
Denote the smallest $i$ with $p_i=0$ by $i_0$.
In \cite{Zheng}, Zheng proved that $\underline{a_{i_0}}=\underline{b_{i_0}}$ if
$\underline{a}=-\underline{b}$. For general $i$, there is no result in the literature.

In this article, we will prove that for $i\ne i_0$, we have $\underline{a_i}=\underline{b_i}$ if and only
if $\underline{a}=\underline{b}$, and for $i=i_0$, we have $\underline{a_{i_0}}=\underline{b_{i_0}}$ if and only
if $\underline{a}=\underline{b}$ or $\underline{a}=-\underline{b}$. We devote Section $2$ to prove our main results.
In Section $3$, we discuss some possible applications.

\section{Main results}
In this section, we will prove our main results. First, we give some notations. Let $p$, $k$, $i_0$
be defined as above. As in \cite{ZhQT}, the notation $[n]_{\lmod m}$ means the least nonnegative integer residue
of $n$ modulo $m$. For an element $a\in \{0, 1, \ldots, p-1\}$, there is a unique $2$-adic expansion
$$a=a_0+a_1\cdot 2+\cdots+a_k\cdot 2^k.$$ Then we can define a family of maps $B_i$ for $0\le i \le k$ as follows:
\begin{align*}
B_i:\quad \mathbb{F}_p&\rightarrow\{0,1\}\\
a&\mapsto B_i(a)=a_i.
\end{align*}
Then $B_i(a)=0$ means that $0\le [a]_{\lmod 2^{i+1}}<2^i$ and $B_i(a)=1$ means that $2^{i}\le [a]_{\lmod 2^{i+1}}<2^{i+1}$.
The $2$-adic $i$th level sequence $\underline{a_{i}}$ is just $(B_{i}(a(t)))_{t\ge 0}$. For the maps $B_{i}$'s,
we have the following two lemmas. The notation
$\lambda\cdot \underline{a}$ means the sequence $([\lambda\cdot a(t)]_{\lmod p})_{t\ge 0}$.

\begin{Lem}\label{Lem:p-1}
Let $0\le i\le k$. Then $B_i(a)=B_i([-a]_{\lmod p})$ holds for all $a\in\mathbb{F}_{p}$ if and only if $i=i_0$.
\end{Lem}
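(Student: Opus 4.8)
The plan is to reduce the identity $B_i(a)=B_i([-a]_{\lmod p})$ to a comparison of the $i$th binary digits of the two integers $a$ and $p-a$, and then to control the borrow occurring in the binary subtraction $p-a$. For $a=0$ both sides equal $B_i(0)=0$, so throughout I may assume $1\le a\le p-1$, in which case $[-a]_{\lmod p}=p-a\in\{1,\ldots,p-1\}$. Since $B_i(x)$ depends only on $[x]_{\lmod 2^{i+1}}$ — indeed, by the characterization in the text, $B_i(x)=1$ exactly when $[x]_{\lmod 2^{i+1}}\ge 2^i$ — and since $[p-a]_{\lmod 2^{i+1}}=\bigl[\,[p]_{\lmod 2^{i+1}}-[a]_{\lmod 2^{i+1}}\,\bigr]_{\lmod 2^{i+1}}$, the whole question is governed by the two residues $[p]_{\lmod 2^{i+1}}$ and $[a]_{\lmod 2^{i+1}}$ together with the digit $p_i$.

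For the ``if'' direction I would use the definition of $i_0$: the digits $p_0,\ldots,p_{i_0-1}$ all equal $1$ and $p_{i_0}=0$, so $[p]_{\lmod 2^{i_0+1}}=2^{i_0}-1$. Writing $A=[a]_{\lmod 2^{i_0+1}}\in\{0,\ldots,2^{i_0+1}-1\}$, I compute $[p-a]_{\lmod 2^{i_0+1}}=\bigl[(2^{i_0}-1)-A\bigr]_{\lmod 2^{i_0+1}}$ in the two cases $A<2^{i_0}$ (that is, $B_{i_0}(a)=0$) and $A\ge 2^{i_0}$ (that is, $B_{i_0}(a)=1$). A direct calculation shows the result lies in $[0,2^{i_0})$ in the first case and in $[2^{i_0},2^{i_0+1})$ in the second, so $B_{i_0}(p-a)=B_{i_0}(a)$ in both, which establishes the identity for $i=i_0$.

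For the ``only if'' direction I would argue contrapositively, exhibiting for each $i\ne i_0$ one value $a$ with $B_i(a)\ne B_i(p-a)$. For $i<i_0$ all of $p_0,\ldots,p_i$ equal $1$, so $[p]_{\lmod 2^{i+1}}=2^{i+1}-1$ and no reduction occurs: $[p-a]_{\lmod 2^{i+1}}=(2^{i+1}-1)-[a]_{\lmod 2^{i+1}}$ is the $(i+1)$-bit complement of $[a]_{\lmod 2^{i+1}}$, whence $B_i(p-a)=1-B_i(a)$ for every $a\ne 0$, never equal. For $i>i_0$ with $p_i=1$, the choice $a=1$ works: $B_i(1)=0$, while $p-1$ has $2$-adic digits $0,p_1,\ldots,p_k$, so $B_i(p-1)=p_i=1$. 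The remaining, most delicate case is $i>i_0$ with $p_i=0$, where $a=1$ fails; here I exploit the vanishing digit at position $i_0<i$. Taking $a=2^i-1$ (which lies in $\{1,\ldots,p-1\}$ since $i\le k$ and $p\ge 2^k+1$), one has $B_i(a)=0$; moreover $[p]_{\lmod 2^i}\le 2^i-2$ because its $i_0$th digit is $0$, so the intermediate difference is negative, a borrow forces the addition of $2^{i+1}$, and a short computation yields $[p-a]_{\lmod 2^{i+1}}=[p]_{\lmod 2^i}+2^i+1\ge 2^i$, giving $B_i(p-a)=1\ne B_i(a)$.

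The main obstacle is precisely this last case. The naive counterexample $a=1$ only detects positions where $p_i=1$, so to handle $i>i_0$ with $p_i=0$ one must use the distinguished zero digit $p_{i_0}$ to manufacture a borrow at the higher position $i$; the care lies in the bookkeeping of the modular reduction when the intermediate difference goes negative and $2^{i+1}$ is added, together with verifying that each chosen $a$ (and hence $p-a$) genuinely belongs to $\{1,\ldots,p-1\}$, which is where the bounds $p_0=p_k=1$ and $p\ge 2^k+1$ enter.
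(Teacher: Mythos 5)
Your proposal is correct; every step checks out, including the delicate case $i>i_0$ with $p_i=0$, where the witness $a=2^i-1$ together with the bound $[p]_{\lmod 2^i}\le 2^i-2$ (forced by the zero digit at position $i_0<i$) does produce the claimed borrow and gives $B_i(p-a)=1\ne 0=B_i(a)$. Your overall strategy matches the paper's --- the ``if'' direction by direct computation from $p\equiv 2^{i_0}-1 \pmod{2^{i_0+1}}$, the ``only if'' direction by explicit counterexamples --- but the decomposition of the ``only if'' direction is genuinely organized differently. The paper never splits on the position of $i$ relative to $i_0$: writing $p=z\cdot 2^i+w$ with $0\le w<2^i$, it treats just two cases, $B_i(p)=1$ (a parity argument valid for any $a\ne 0$ whose low part $y$ satisfies $y\le w$, so that no borrow crosses position $i$) and $B_i(p)=0$ with $w<2^i-1$ (witness $a=w+1$, which creates a borrow); the conclusion that the identity forces $p_i=0$ and $w=2^i-1$ then \emph{derives} the characterization $i=i_0$ as output of the argument. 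You instead assume the defining property of $i_0$ up front and verify failure for each $i\ne i_0$ in three cases, with fixed witnesses ($a=1$ for $p_i=1$, $i\ge 1$; $a=2^i-1$ for $p_i=0$, $i>i_0$; any $a\ne 0$ for $i<i_0$) rather than witnesses depending on $w$. What each buys: your $a=1$ digit-flip argument is simpler than the paper's parity computation, and your complementation observation for $i<i_0$ is clean; the paper's version is more uniform (two cases instead of three), keeps all bookkeeping local to position $i$, and identifies $i_0$ rather than checking against it. In fact your witnesses for $i<i_0$ and for $p_i=1$ are instances of the paper's $B_i(p)=1$ family, so the only substantively new choice is $a=2^i-1$ in place of $a=w+1$; both exploit the same borrow mechanism.
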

\begin{proof}
``$\Rightarrow$".Assume $B_i(a)=B_i([-a]_{\lmod p})$ holds for all $a\in \mathbb{F}_{p}$. Let
$$a=x\cdot 2^i+y\qquad p=z\cdot 2^i+w$$ with $0\le y,w<2^i$. Then we have $B_i(a)=[x]_{\lmod 2}$ and
$B_{i}(p)=[z]_{\lmod 2}$. If $B_{i}(p)=1$, then we can choose $a\neq 0$ and $0\le y\le w$, then
$$B_{i}([-a]_{\lmod p})=B_{i}(p-a)=[z-x]_{\lmod 2}\neq [x]_{\lmod 2}=B_{i}(a).$$
If $B_{i}(p)=0$ and $0\le w<2^i-1$, we choose $a=w+1$. Then we have $p-a=(z-1)\cdot 2^i+2^i-1$.
Thus $$B_{i}([-a]_{\lmod p})=B_{i}(p-a)=[z-1]_{\lmod 2}=1\neq 0=B_{i}(a).$$
Thus for $B_{i}(p)=0$, we must have $w=2^i-1$, which just means $i=i_0$.

``$\Leftarrow$". Assume $i=i_0$. A proof can be seen in \cite[Theorem 5.3]{Zheng}.
For completeness, we give a proof here. Since $a=0$ is a trivial case, we assume $0<a<p$. Let $a=x\cdot 2^{i_0}+y$ with
$0\le y\le 2^{i_0}-1$. According to the definition of $i_0$, we have $p=z\cdot 2^{i_0}+w$ with even $z$ and $w=2^{i_0}-1$.
Thus $[-a]_{\lmod p}=p-a=(z-x)\cdot 2^{i_0}+w-y$. We have
$$B_{i_0}(p-a)=[z-x]_{\lmod 2}=[x]_{\lmod 2}=B_{i_0}(a).$$
The proof is complete.
\end{proof}

For $1<\lambda<p-1$, we have the following result.

\begin{Lem}\label{Lem:lambda}
Let $1<\lambda<p-1$ and $0\le i\le k$. Then $B_i(a)=B_i([\lambda\cdot a]_{\lmod p})$ can not hold for
all $a\in\mathbb{F}_{p}$.
\end{Lem}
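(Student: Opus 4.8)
The plan is to argue by contradiction: assume $B_i(a)=B_i([\lambda a]_{\lmod p})$ holds for every $a\in\mathbb{F}_p$ and derive $\lambda\in\{1,p-1\}$, contradicting $1<\lambda<p-1$. First I would record the closed form $B_i(a)=\lfloor a/2^i\rfloor \bmod 2$, so that the hypothesis says precisely that the set $S=\{a:B_i(a)=1\}$, a union of length-$2^i$ dyadic blocks inside $[0,p)$, satisfies $\lambda S\equiv S \pmod p$; equivalently, reading $B_i$ along the progression $0,\lambda,2\lambda,\dots \pmod p$ reproduces the block pattern $0^{2^i}1^{2^i}0^{2^i}\cdots$. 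The elementary facts I would lean on are that $B_i(a)\neq B_i(a-1)$ exactly when $2^i\mid a$ (for $1\le a\le p-1$), and the reflection identity $[\lambda(p-a)]_{\lmod p}=p-[\lambda a]_{\lmod p}$, which together with Lemma~\ref{Lem:p-1} lets me pass between the $\lambda$ and $-\lambda$ situations.

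The case $i=0$ I would dispatch directly: here $B_0(x)$ is the parity of $x$, so $a=1$ forces $\lambda$ odd, and then at the first index $a$ with $\lambda a\ge p$ one has $[\lambda a]_{\lmod p}=\lambda a-p$, whose parity is opposite to that of $a$, a contradiction. For $i\ge1$ I would work inside the initial no-wrap segment $a\in[0,a_1)$, where $a_1=\lceil p/\lambda\rceil$ is the first index at which $[\lambda a]_{\lmod p}$ wraps; there $[\lambda a]_{\lmod p}=\lambda a$ honestly, and the hypothesis reads $\lfloor \lambda a/2^i\rfloor\equiv\lfloor a/2^i\rfloor\pmod 2$. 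Restricting to the first block $a\in[0,2^i)$, where $\lfloor a/2^i\rfloor=0$, the integer $\lfloor \lambda a/2^i\rfloor$ increases monotonically from $0$ to a positive value; when $\lambda$ is small relative to $2^i$ it increases in unit steps and is therefore forced through the odd value $1$, contradicting that every term must be even. Iterating this block-by-block transition bookkeeping should then pin $\lambda$ down to $1$ or $p-1$.

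The hard part will be the regime of large $\lambda$, where two features conspire against the clean argument: the running value $[\lambda a]_{\lmod p}$ wraps around $0$ many times before $a$ leaves even its first block, and the increments $\lfloor \lambda(a+1)/2^i\rfloor-\lfloor\lambda a/2^i\rfloor$ can exceed $1$, so the bit sequence may leap over an odd number of block boundaries and never be caught by the ``passes through $1$'' device. I expect to control this either by replacing $\lambda$ with its inverse $\lambda^{-1}$ (the invariance $\lambda S\equiv S \pmod p$ holds for $\lambda$ if and only if it holds for $\lambda^{-1}$, since the admissible multipliers form the stabilizer subgroup of $S$ in $\mathbb{F}_p^{*}$) to move into a more tractable range, or by counting the boundary crossings of $B_i$ along the full $+\lambda$ cycle and comparing them, position by position rather than merely in total, with the crossings mandated by the fixed block pattern. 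Matching these crossings exactly is what I anticipate will ultimately force $\lambda\equiv\pm1$ and thereby the contradiction.
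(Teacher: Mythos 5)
Your case $i=0$ matches the paper's argument and is fine, and your ``small $\lambda$'' observation for $i\ge 1$ is correct in the narrow regime where it applies (you need both $\lambda\le 2^i$, so that $\lfloor\lambda a/2^i\rfloor$ moves in unit steps, and $2^i+\lambda<p$, so that the value $1$ is reached before any wrap). But the proof has a genuine gap exactly where you flag it: the regime where $\lambda>2^i$, or where multiples of $\lambda$ wrap around $p$ inside the first block, is not handled at all; you only name two candidate strategies without carrying either out. The first strategy, passing to $\lambda^{-1}$, cannot close the gap in general: the admissible multipliers do form a subgroup of $\mathbb{F}_p^{*}$, but nothing forces that subgroup to contain a small element other than $1$. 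For instance, if $\lambda$ has multiplicative order $3$, the subgroup it generates is $\{1,\lambda,\lambda^{-1}\}$, where $\lambda$ and $\lambda^{-1}$ are the two roots of $x^2+x+1$ modulo $p$; these are typically of size comparable to $p$, so neither lies in your tractable range, and no power of $\lambda$ helps either. So the hard case remains open in your writeup.

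Your second strategy is the germ of the correct idea, but you dismiss the version that actually works: counting crossings ``merely in total'' suffices, and the position-by-position matching you anticipate is neither needed nor obviously feasible. The paper's proof counts, in two ways, the number of $b\in\mathbb{F}_p$ with $B_i(b)=0$ and $B_i([b+\lambda]_{\lmod p})=1$. On one hand, writing $b=[\lambda\cdot a]_{\lmod p}$ and applying the hypothesis to both $a$ and $a+1$, this count equals the number of $a$ with $B_i(a)=0$ and $B_i(a+1)=1$, i.e.\ with $[a]_{\lmod 2^{i+1}}=2^i-1$; writing $p=z\cdot 2^{i+1}+w$ with $0\le w<2^{i+1}$, that number is $z$ (if $w<2^i$) or $z+1$ (if $w>2^i$). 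On the other hand, the $b$-count is computed directly by splitting into the no-wrap range $0\le b<p-\lambda$ and the wrap range $p-\lambda\le b<p$, giving, with $\lambda=x\cdot 2^{i+1}+y$, expressions such as $(z-x)y+x(w-y)$. Equating the two counts yields a Diophantine identity, e.g.\ $z=(z-x)y+x(w-y)$ when $B_i(p)=0$, whose only solutions force $\lambda=1$ or $\lambda=p-1$. This single global count over the whole cycle --- no block-by-block transition bookkeeping, no inversion trick, and valid for every $\lambda$ at once --- is precisely what your proposal is missing.
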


\begin{proof}
We first prove for the case $i=0$. Then $B_{0}(a)=0$ if and only if $a\equiv0\mod 2$.
A proof can also be seen in \cite[Lemma 4.6]{ZhuQ2008}.
Assume $B_0(1)=B_0(\lambda)=1$, then $\lambda\equiv 1\mod 2$. If not, the equality does not hold for $a=1$.
Since $\lambda>1$, there exists some $a\in\{1,2,\ldots, p-1\}$ such that
$p<\lambda\cdot a<2p$. Then $[\lambda\cdot a]_{\lmod p}=\lambda\cdot a-p$, and we have
$$\lambda\cdot a-p\equiv a-1\not\equiv a\mod 2.$$ For this $a$, $B_0(a)\neq B_0([\lambda\cdot a]_{\lmod p})$.

Now we are going to proof for the case $1\le i\le k$. We assume $B_i(a)=B_i([\lambda\cdot a]_{\lmod p})$
holds for all $a\in \mathbb{F}_{p}$. Let
$$\lambda=x\cdot 2^{i+1}+y\qquad p=z\cdot 2^{i+1}+w$$
with $0\le y, w< 2^{i+1}$. Since $B_{i}(\lambda)=B_{i}(1)=0$, we have $0\le y<2^i$. Also
we have $B_{i}(p-1)=B_{i}([\lambda\cdot (p-1)]_{\lmod p})=B_{i}(p-\lambda)$. We divide the proof
into two cases: $B_{i}(p)=0$ and $B_i(p)=1$.

If $B_{i}(p)=0$, then $0\le w<2^i$ and from $p>2^k\ge2^i>w$, we have $z>0$.
Since $p$ is odd and so is $w$, we have $1\le w<2^i$ and
then $B_{i}(p-1)=0$. So $B_{i}(p-\lambda)=0$. Then
$$0\le y <2^i\qquad 0\le [w-y]_{\lmod 2^{i+1}}<2^i.$$
Thus $0\le y\le w$. Now we
count the number of $b$'s such that $B_i(b)=0$ and $B_{i}([b+\lambda]_{\lmod p})=1$.
For each $b$, there is a unique $a$ such that $[\lambda\cdot a]_{\lmod p}=b$. Then $B_{i}(a)=B_{i}(b)=0$
and $B_{i}(a+1)=B_{i}([b+\lambda]_{\lmod p})=1$,
which means $[a]_{\lmod 2^{i+1}}=2^i-1$. There are totally $z$ such $a$'s in the interval $0\le a<p$.
Denote $[b]_{\lmod 2^{i+1}}$
by $t$. If $0\le b<p-\lambda$, then $[b+\lambda]_{\lmod p}=b+\lambda$ and we have
$$0\le t<2^i\qquad 2^i\le [t+y]_{\lmod 2^{i+1}}=t+y< 2^{i+1}, $$
which means $2^i-y\le t< 2^i$. As $p-\lambda=(z-x)\cdot 2^{i+1}+(w-y)$ and $w-y<2^i-y$, there are
$(z-x)y$ such $b$'s in this interval. If $p-\lambda\le b<p$, then $[b+\lambda]_{\lmod p}=b+\lambda-p$ and we have
$$0\le t< 2^i\qquad 2^i\le [t+y-w]_{\lmod 2^{i+1}}<2^{i+1}.$$
Since $-2^i<t+y-w\le t<2^i$, then we must have $t+y-w<0$ and $0\le t<w-y$. There are $x(w-y)$ such $b$'s in this interval.
Then we have $$z=(z-x)y+x(w-y).$$
If one of the four nonnegative integers $z-x$, $y$, $x$, $w-y$ is zero, we can reduce to
$\lambda=1$ or $p-1$, which contradicts to the condition that $1<\lambda<p-1$.\\
(1) If $y=0$, then $z=xw$. Thus $w\mid p$ and $w<p$, so $w=1$ and $z=x$, which means $\lambda=p-1$.\\
(2) If $w-y=0$, then $z=(z-x)w$. For the same reason we have $w=1$. Then $x=0$ and $y=w=1$. Thus $\lambda=1$.\\
(3) If $x=0$, then $z=zy>0$. So $y=1$ and $\lambda=1$.\\
(4) If $z-x=0$, then $z=z(w-y)>0$. So $w-y=1$ and $\lambda=p-1$.\\
Now we assume all of the four integers are positive. Then
$$z=(z-x)y+x(w-y)\ge (z-x)+x=z.$$
So we have $y=w-y=1$ and $w=2$, which is impossible since $w$ is odd.

If $B_{i}(p)=1$, then $2^i\le w<2^{i+1}$. Since $w$ is odd, then $w>2^i$ and $B_{i}(p-1)=1$.
So $B_{i}(p-\lambda)=1$. Then
$$0\le y<2^i \qquad 2^i\le [w-y]_{\lmod 2^{i+1}}=w-y<2^{i+1}.$$
Thus $0\le y\le w-2^i$.
Again  we count the number of $b$'s such that $B_i(b)=0$ and $B_{i}([b+\lambda]_{\lmod p})=1$.
For each $b$, there is a unique $a$ such that $[\lambda\cdot a]_{\lmod p}=b$, then $B_{i}(a)=0$ and $B_{i}(a+1)=1$,
which means $[a]_{\lmod 2^{i+1}}=2^i-1$. There are totally $z+1$ such $a$'s in the interval $0\le a<p$.
Denote $[b]_{\lmod 2^{i+1}}$
by $t$. If $0\le b<p-\lambda$, then we have
$$0\le t<2^i\qquad 2^i\le [t+y]_{\lmod 2^{i+1}}=t+y< 2^{i+1}, $$
which means $2^i-y\le t< 2^i$. As $p-\lambda=(z-x)\cdot 2^{i+1}+(w-y)$ and $w-y\ge 2^i$, there are
$(z-x+1)y$ such $b$'s in this interval. If $p-\lambda\le b<p$, then $[b+\lambda]_{\lmod p}=b+\lambda-p$ and we have
$$0\le t< 2^i\qquad 2^i\le [t+y-w]_{\lmod 2^{i+1}}<2^{i+1}.$$
Since $-2^{i+1}<t+y-w\le t-2^i<0$, we must have $-2^i\le t+y-w<0$ and then $w-y-2^i\le t<2^i$.
There are $x(2^{i+1}-w+y)$ such $b$'s in this interval. Then we have
$$z+1=(z-x+1)y+x(2^{i+1}-w+y)=(z+1)y+x(2^{i+1}-w).$$
If $y>0$, as $x(2^{i+1}-w)\ge 0$, we have $y=1$ and $x=0$. Then $\lambda=1$, which contradicts to
the condition $1<\lambda<p-1$. If $y=0$, then $z+1=x(2^{i+1}-w)$. We have
$$p=z\cdot2^{i+1}+w=z(2^{i+1}-w)+(z+1)w.$$
Then $(2^{i+1}-w)\mid p$. Since $w>2^i$, then $2^{i+1}-w<2^i<w\le p$. So $2^{i+1}-w=1$ and then $z+1=x$,
which is impossible since $z\ge x$. Then the proof is complete.
\end{proof}

Now we can prove the following theorem about level sequences of $2$-adic expansion of m-sequences.

\begin{Thm}\label{Thm:levelsequences}
Let $p$, $k$, $i_0$ be defined as above. Assume $\underline{a}=(a(t))_{t\ge 0}$ and
$\underline{b}=(b(t))_{t\ge 0}$ are two different
m-sequences generated by a same primitive polynomial over $\mathbb{F}_{p}$.
Then $\underline{a_{i_0}}=\underline{b_{i_0}}$ if and only if
$\underline{a}=\underline{b}$ or $\underline{a}=-\underline{b}$, and for $i\neq i_0$,
$\underline{a_{i}}=\underline{b_{i}}$ if and only if $\underline{a}=\underline{b}$.
\end{Thm}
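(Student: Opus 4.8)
The plan is to reduce the statement to Lemmas~\ref{Lem:p-1} and~\ref{Lem:lambda} by exploiting the algebraic structure of m-sequences. Let $n$ be the degree of the common primitive polynomial and fix a root $\alpha\in\mathbb{F}_{p^n}$, so that every sequence it generates has a trace representation $a(t)=\operatorname{Tr}(\theta_a\alpha^t)$ for a unique $\theta_a\in\mathbb{F}_{p^n}$, where $\operatorname{Tr}$ is the trace from $\mathbb{F}_{p^n}$ to $\mathbb{F}_p$; since $\underline{a},\underline{b}$ are nonzero we have $\theta_a,\theta_b\neq 0$. I would then split on whether $\theta_a$ and $\theta_b$ are linearly dependent over $\mathbb{F}_p$. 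Dependence is equivalent to $\underline{b}=\lambda\cdot\underline{a}$ for some $\lambda\in\mathbb{F}_p^{*}$ (one pulls the scalar through the trace), whereas independence is the generic situation and forces $n\geq 2$.

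In the dependent case $\underline{b}=\lambda\cdot\underline{a}$, the hypothesis $\underline{a_i}=\underline{b_i}$ reads $B_i(a(t))=B_i([\lambda\cdot a(t)]_{\lmod p})$ for every $t$. Because the terms of the m-sequence $\underline{a}$ exhaust the nonzero residues and the identity is automatic at $a=0$, this is equivalent to $B_i(a)=B_i([\lambda\cdot a]_{\lmod p})$ for all $a\in\mathbb{F}_p$. Lemma~\ref{Lem:lambda} excludes $1<\lambda<p-1$, leaving only $\lambda=1$ (which gives $\underline{a}=\underline{b}$) or $\lambda=p-1$ (which gives $\underline{a}=-\underline{b}$). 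For $\lambda=p-1$, Lemma~\ref{Lem:p-1} shows that the identity holds for all $a$ exactly when $i=i_0$, which is precisely what produces the extra solution $\underline{a}=-\underline{b}$ at level $i_0$ and forbids it when $i\neq i_0$. The converse directions are immediate: $\underline{a}=\underline{b}$ equates all level sequences, while $\underline{a}=-\underline{b}$ yields $\underline{a_{i_0}}=\underline{b_{i_0}}$ by the ``$\Leftarrow$'' half of Lemma~\ref{Lem:p-1}.

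The crux is the independent case, where I must rule out $\underline{a_i}=\underline{b_i}$ entirely. Here the $\mathbb{F}_p$-linear map $x\mapsto(\operatorname{Tr}(\theta_a x),\operatorname{Tr}(\theta_b x))$ from $\mathbb{F}_{p^n}$ to $\mathbb{F}_p^2$ is surjective, since its kernel is the intersection of two distinct hyperplanes and hence has dimension $n-2$. Consequently, as $x=\alpha^t$ runs over $\mathbb{F}_{p^n}^{*}$ the pair $(a(t),b(t))$ attains every value of $\mathbb{F}_p^2\setminus\{(0,0)\}$ exactly $p^{n-2}$ times. In particular the pair $(0,2^i)$ occurs (note $2^i<p$ for $0\le i\le k$), so some $t$ satisfies $a(t)=0$ and $b(t)=2^i$, giving $B_i(a(t))=0\neq 1=B_i(b(t))$ and contradicting $\underline{a_i}=\underline{b_i}$. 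Combining the two cases establishes the theorem for every $i$ with $0\le i\le k$.

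I expect the main obstacle to be this pair-distribution argument in the independent case: one must justify the surjectivity of the joint-trace map, equivalently the two-dimensional uniform distribution of the pairs $(a(t),b(t))$, and then exhibit a single pair that both provably occurs and separates the $i$-th bit. The dependent case, by contrast, is little more than bookkeeping layered on top of the two lemmas already established.
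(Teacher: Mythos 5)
Your proof is correct and takes essentially the same approach as the paper: split on whether $\underline{a}$ and $\underline{b}$ are linearly dependent, reduce the dependent case to the statement $B_i(a)=B_i([\lambda\cdot a]_{\lmod p})$ for all $a\in\mathbb{F}_p$ and apply Lemmas \ref{Lem:p-1} and \ref{Lem:lambda}, and kill the independent case by noting that some $t$ gives a pair $(a(t),b(t))$ whose $i$th bits differ. The only difference is one of rigor, in your favor: the paper merely asserts that when $a(t)=0$ the value $b(t)$ can be arbitrary, whereas you justify this uniform pair distribution via the trace representation and the surjectivity of $x\mapsto(\operatorname{Tr}(\theta_a x),\operatorname{Tr}(\theta_b x))$.
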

\begin{proof}
First, assume $\underline{a}$ and $\underline{b}$ are linearly independent.
For those $t$ with $a(t)=0$, $b(t)$ can be any element in $\mathbb{F}_{p}$, then
$\underline{a_i}\neq\underline{b_i}$ for every $i$ satisfying $0\le i\le k$. If
$\underline{a}$ and $\underline{b}$ are linearly dependent, then there is $\lambda\in\{1,2,\ldots, p-1\}$
such that $\underline{a_i}=\lambda\cdot\underline{b_i}$. Then $\underline{a_{i}}=\underline{b_{i}}$
if and only if  $B_{i}(a)=B_{i}([\lambda\cdot a]_{\lmod p})$ holds for all $a\in\mathbb{F}_{p}$.
For $\lambda=p-1$, by Lemma \ref{Lem:p-1}, we
have $\underline{a_{i}}=\underline{b_{i}}$ if and only if $i=i_0$.
For $1<\lambda<p-1$, by Lemma \ref{Lem:lambda}, we have $\underline{a_{i}}\neq\underline{b_{i}}$ for
all $i$'s. The proof is complete.
\end{proof}

\begin{Rem}
When in the $2$-adic expansion $p=\sum_{i=0}^k p_i\cdot 2^i$, every $p_i$ is equal to $1$,
then $p$ is a Mersenne prime $2^{k+1}-1$ and no such $i_0$ exists. By the above theorem,
$\underline{a_i}=\underline{b_{i}}$ if and only if $\underline{a}=\underline{b}$.
This result can be proved by only using the case $i=0$ of Lemma \ref{Lem:lambda} and the following
fact. For $i>0$, if $1\le a=x\cdot 2^i+y<p$ with $0\le y<2^i$, then $0\le x<2^{k+1-i}$. We have
$$2^{k+1-i}\cdot a=x\cdot2^{k+1}+y\cdot 2^{k+1-i}\equiv y\cdot 2^{k+1-i}+x\mod p, \eqno(*)$$
and then $[2^{k+1-i}\cdot a]_{\lmod p}=y\cdot 2^{k+1-i}+x$. So
$$B_{i}(a)=[x]_{\lmod 2}=B_{0}([2^{k+1-i}\cdot a]_{\lmod p}).$$
The sequence $\underline{a_i}$ is equal to $\underline{c_0}$ with $\underline{c}=2^{k+1-i}\cdot \underline{a}$.
For the same reason $\underline{b_i}$ is equal to $\underline{d_0}$ with $\underline{d}=2^{k+1-i}\cdot \underline{b}$.
Thus $\underline{a_i}=\underline{b_i}$ if and only if $\underline{c_0}=\underline{d_0}$. We have
$\underline{c_0}=\underline{d_0}$ if and only if $\underline{c}=\underline{d}$, and if
and only if $\underline{a}=\underline{b}$.
\end{Rem}

We have the following corollary about the period of $\underline{a_i}$.

\begin{Cor}
Let $\underline{a}$ be an m-sequence of order $n$ over $\mathbb{F}_{p}$ and $T=p^n-1$.
Then the period of $\underline{a_{i}}$ is equal to $T$ if $i\ne i_0$ and equal to
$T/2$ if $i=i_0$.
\end{Cor}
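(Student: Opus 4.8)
The plan is to determine, for each level $i$, the smallest positive $\tau$ for which the shifted sequence coincides with $\underline{a_i}$, and to read off the period. Throughout write $L^\tau\underline{a}=(a(t+\tau))_{t\ge0}$ for the $\tau$-shift of $\underline{a}$. Since $\underline{a}$ has period $T=p^n-1$, applying $B_i$ termwise gives $a_i(t+T)=B_i(a(t+T))=B_i(a(t))=a_i(t)$, so the period of $\underline{a_i}$ divides $T$. Moreover shifting commutes with passing to level sequences: the $i$th level sequence of $L^\tau\underline{a}$ is exactly $L^\tau\underline{a_i}$. Hence the periods of $\underline{a_i}$ are precisely those $\tau$ for which $L^\tau\underline{a}$ and $\underline{a}$ have equal $i$th level sequences, and I intend to decide this using Theorem \ref{Thm:levelsequences}.

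The key structural fact I would establish first is that every nonzero scalar multiple of $\underline{a}$ is itself a shift of $\underline{a}$, and in particular that $-\underline{a}=L^{T/2}\underline{a}$. Writing $a(t)=\mathrm{Tr}(\beta\theta^t)$ with $\theta$ a primitive element of $\mathbb{F}_{p^n}$ (a root of the primitive characteristic polynomial) and $\beta\in\mathbb{F}_{p^n}^\ast$, any $\lambda\in\mathbb{F}_p^\ast$ equals $\theta^m$ for some $m$, because $\mathbb{F}_p^\ast$ is a subgroup of the cyclic group $\mathbb{F}_{p^n}^\ast=\langle\theta\rangle$; then $\lambda\cdot a(t)=\mathrm{Tr}(\beta\theta^{t+m})=a(t+m)$. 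Applied to $\lambda=-1$, the unique element of order $2$ in $\mathbb{F}_p^\ast$ (here $p$ odd forces $T=p^n-1$ even), the shift realizing $-\underline{a}$, taken with $0<m<T$ since $-\underline{a}\ne\underline{a}$, satisfies $2m\equiv0\pmod T$; as $0<2m<2T$ this forces $2m=T$, i.e. $m=T/2$. Alternatively one can argue purely at the level of state cycles: the shift acts as a single $T$-cycle on the nonzero states, scalar multiplication commutes with it, and a permutation commuting with a full cycle is a power of that cycle.

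With these preliminaries, fix $\tau$ with $0<\tau<T$ and set $\underline{b}=L^\tau\underline{a}$; then $\underline{b}$ is an m-sequence generated by the same primitive polynomial and $\underline{b}\ne\underline{a}$, so Theorem \ref{Thm:levelsequences} applies. For $i\ne i_0$ it gives $\underline{a_i}=\underline{b_i}$ only when $\underline{a}=\underline{b}$, which fails for $0<\tau<T$; hence no $\tau$ in this range is a period of $\underline{a_i}$, and since the period divides $T$ it must equal $T$. For $i=i_0$ the theorem gives $\underline{a_{i_0}}=\underline{b_{i_0}}$ iff $\underline{a}=\underline{b}$ or $\underline{a}=-\underline{b}$; the first fails for $0<\tau<T$, while $\underline{a}=-\underline{b}=-L^\tau\underline{a}=L^{\tau+T/2}\underline{a}$ holds iff $\tau\equiv T/2\pmod T$. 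Thus the smallest positive $\tau$ with $L^\tau\underline{a_{i_0}}=\underline{a_{i_0}}$ is $\tau=T/2$, and using Lemma \ref{Lem:p-1} to confirm that $L^{T/2}\underline{a_{i_0}}=\underline{a_{i_0}}$ indeed holds, the period of $\underline{a_{i_0}}$ is exactly $T/2$.

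The only genuinely nontrivial ingredient is the identification $-\underline{a}=L^{T/2}\underline{a}$; everything else is a direct translation through Theorem \ref{Thm:levelsequences}. I expect the main obstacle to be making this shift-identification rigorous, namely justifying that scalar multiplication by $-1$ corresponds to the half-period shift, for which the trace representation (or the single-cycle structure of the state graph) is the cleanest route.
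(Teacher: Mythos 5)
Your proof is correct and follows essentially the same route as the paper: both reduce the statement to Theorem \ref{Thm:levelsequences} applied to shifts of $\underline{a}$, using that the least period of $\underline{a_i}$ divides $T$ and that nontrivial shifts of $\underline{a}$ are distinct m-sequences with the same characteristic polynomial. The only difference is that you prove the key identity $-\underline{a}=(a(t+T/2))_{t\ge 0}$ yourself via the trace representation, whereas the paper simply cites \cite{LiNi} for this fact.
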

\begin{proof}
The period of $\underline{a}$ is $T$\cite{LiNi}. Denote the period of $\underline{a_i}$
by $T_i$. Then we have $T_{i}\mid T$ and $B_{i}(a(t+T_{i}))=B_{i}(a(t))$ holds for all
$t\ge 0$. Let $\underline{b}=((b(t))_{t\ge 0}=(a(t+T_i))_{t\ge 0}$. Then we have
$\underline{b_{i}}=\underline{a_{i}}$. By Theorem \ref{Thm:levelsequences}, if $i\ne i_0$,
$\underline{a}=\underline{b}$, which means $T\mid T_i$. So $T_{i}=T$. If $i=i_0$, we have
$\underline{a}=-\underline{b}$ or $\underline{a}=\underline{b}$. The smallest $T_{i_{0}}$
such that $a(t)=[-a(t+T_{i_{0}})]_{\lmod p}$ is $T/2$\cite{LiNi}. So the period of
$\underline{a_{i_{0}}}$ is $T/2$. The proof is complete.
\end{proof}

\section{Applications}
In this section, we discuss a possible application of the $2$-adic level sequences. In \cite{Zuc}, the
ZUC algorithm adopts primitive sequences over the prime field of order $2^{31}-1$ as drive sequences,
and the $0$th level sequence is used.
The addition and multiplication modulo $2^{31}-1$ have a fast implementation. Since $2^{31}-1$ is a Mersenne
prime, we can see from ($*$) that the modulo $2^{31}-1$ multiplication by powers of $2$ can be done
by cyclic shift. For addition, if $a+b=c\cdot 2^{31}+d$ with $0\le a, b<2^{31}-1$, then
$[a+b]_{\lmod 2^{31}-1 }=c+d$.

If only a few coefficients of the $2$-adic expansion of $a$ are equal to $1$, then for $p$ of the form $2^n-a$,
the addition and multiplication modulo $p$ also have a fast implementation. We suggest that the ZUC algorithm may
use m-sequences over prime fields of order $2^{n}-a$ instead of the field of order $2^{31}-1$. There are two reasons.
First, we can choose $n$ to be $32$ and $64$ to make the operations more suitable to
$32$ and $64$ bit platforms respectively. Second, there are more primes and more level sequences to use.
If we let $a$ be of the form $2^i+1<2^{n-1}$, it is easy to check that $2^{32}-a$ is prime for $i=2, 4,6,23,24,25,29$
and $2^{64}-a$ is prime for $i=8, 10, 29$. Furthermore, these primes are not Mersenne primes and then
for different $i$ and $j$, the $i$th level sequence of an m-sequence is not necessarily equal to
the $j$th level sequence of another m-sequence. There are more level sequences can be used for each prime.

\end{document}